\newcommand{\ie}{{i.e.,~}}
\newcommand{\N}{\mathbb{N}}
\newcommand{\D}{\mathscr{D}}
\newcommand{\p}{{\ensuremath{\cal P}}}
\newcommand{\np}{{\ensuremath{\cal NP}}}
\newtheorem{thm}{Theorem}
\newtheorem{prop}{Proposition}
\newtheorem{lem}{Lemma}
\newtheorem{obs}{Observation}
\newcommand{\ignore}[1]{}
\journal{Information Processing Letters}
\begin{document}

\begin{frontmatter}

\title{Topological Additive Numbering of Directed Acyclic Graphs\tnoteref{grant}}

\author[a]{Javier Marenco}
\author[a]{Marcelo Mydlarz}
\author[b]{Daniel Sever\'in\fnref{correspon}}

\address[a]{ Universidad Nacional de General Sarmiento,
	Argentina }

\address[b]{ Universidad Nacional de Rosario,
	Argentina }

\tnotetext[grant]{Partially supported by ANPCyT PICT-2009-0119 (Argentina)}
\fntext[correspon]{Corresponding author at Depto. de Matem\'atica, Escuela de Formaci\'on B\'asica, Facultad de Ciencias Exactas y Naturales, Universidad
Nacional de Rosario. \emph{Address}: Pellegrini 250, Rosario, Argentina\\
\emph{E-mail addresses}: \texttt{\{jmarenco,mmydlarz\}@ungs.edu.ar},
\texttt{daniel@fceia.unr.edu.ar}}

\begin{abstract}
We propose to study a problem that arises naturally from both Topological Numbering of Directed Acyclic Graphs, and Additive Coloring (also known as Lucky Labeling).
Let $D$ be a digraph and $f$ a labeling of its vertices with positive integers; denote by $S(v)$ the sum of labels over all neighbors of each vertex $v$.
The labeling $f$ is called \emph{topological additive numbering} if $S(u) < S(v)$ for each arc $(u,v)$ of the digraph.
The problem asks to find the minimum number $k$ for which $D$ has a topological additive numbering
with labels belonging to $\{ 1, \ldots, k \}$, denoted by $\eta_t(D)$.

We characterize when a digraph has topological additive numberings,
give a lower bound for $\eta_t(D)$, and
provide an integer programming formulation for our problem, characterizing when its coefficient matrix is totally unimodular.
We also present some families for which $\eta_t(D)$ can be computed in polynomial time.
Finally, we prove that this problem is \np-Hard even when its input is restricted to planar bipartite digraphs.
\end{abstract}

\begin{keyword}
Additive coloring,
Lucky labeling,
Directed acyclic graphs,
Topological numbering,
Topological additive numbering,
Computational complexity
\end{keyword}

\end{frontmatter}


\section{Introduction} \label{SINTRO}

\emph{Graph Coloring} (GC) is one of the most representative problems in graph theory and combinatorial optimization because of its
practical relevance and theoretical interest. Below, we present two known variants of GC.

Let $D = (V, A)$ be a directed acyclic graph (DAG), and let $S: V \rightarrow \N$ be a labeling of the vertices of $D$.
If $S(u) < S(v)$ for every $(u, v) \in A$, then $S$ is called a \emph{topological numbering} of $D$~\cite{TOPOBOOK}.
We refer to the problem of finding the minimum number $k$ for which such labeling $S$ satisfies $S(v) \le k$ for all $v \in V$ as \emph{Topological Numbering} of DAGs (TN).
This number $k$ is also the size of the largest directed path in $D$ (Gallai Theorem~\cite{GALLAI}).
TN is solvable in polynomial time and generalizations of it give rise to different applications: PERT/CPM problems and the
buffer assignment problem for weighted rooted graphs~\cite{BALANCEDACYCLIC}, and frequency assignment problems with fixed orientations~\cite{GROTSCHEL}.

The other variant of GC in which we are interested is \emph{Additive Coloring} (AC), also known as Lucky Labeling.
Let $G = (V, E)$ be a graph, $f: V \rightarrow \N$ a labeling of its vertices and $S(v)$ the sum of labels over all
neighbors of $v$ in $G$, \ie $S(v) = \sum_{w \in N(v)} f(w)$, where $N(v)$ is the set of neighbors of $v$.
If $S(u) \neq S(v)$ for every $(u,v) \in E$, then $f$ is called \emph{additive $k$-coloring} of $G$, where
$k$ is the largest label used in $f$.
AC consists in finding the \emph{additive chromatic number} of $G$, which is defined as the least number $k$ for which $G$ has an additive $k$-coloring
and is denoted by $\eta(G)$.

AC was first presented by Czerwi\'nski, Grytczuk and Zelazny~\cite{LUCKYORIGINAL}.
They conjecture that $\eta(G) \leq \chi(G)$ for every graph $G$, where $\chi(G)$ is the
chromatic number of $G$.
The problem as well as the conjecture have recently gained considerable interest~\cite{LUCKYCOMPLEXITY,LUCKYCOTASUP,ADDITIVEPLANAR}.

In particular, we proposed an exact algorithm for solving AC based on Benders' Decomposition~\cite{MACI2013}.
This algorithm needs to solve several instances of an ``oriented version'' of AC.
Let $D = (V, A)$ be a DAG, $f: V \rightarrow \N$
a labeling and $S(v) = \sum_{w \in N(v)} f(w)$ for all $v \in V$.
If $S(u) < S(v)$ for every $(u, v) \in A$, then $f$ is called \emph{topological additive $k$-numbering} of $D$, with
$k$ the largest label used in $f$.

Unlike other coloring problems (including AC and TN), a digraph may lack any topological additive numbering.
Let $\D$ denote the set of digraphs that have at least one topological additive numbering.
Then, for $D \in \D$, the \emph{topological additive number} of $D$, denoted by $\eta_t(D)$, is defined as the least number $k$ for which $D$ has a topological additive $k$-numbering.
We call the problem of finding this number \emph{Topological Additive Numbering} of DAGs (TAN).

As far as we know, there are no references to TAN in the literature.
Our main contribution is to address TAN from a computational point of view.
We first present some properties of TAN, including a lower bound for $\eta_t(D)$ and families of digraphs for which it is easy to exactly compute this number.
We also give a linear integer programming formulation of TAN and characterize when its coefficient matrix is totally unimodular.
At the end, we show that the problem is \np-Hard even for planar bipartite digraphs.

\section{Basic properties of TAN} \label{BASIC}

Let $D = (V, A)$ be a DAG with $V = \{1,\ldots,n\}$.
We will assume that $D$ is connected,
and its vertices are ordered so that $u < v$ holds whenever $(u, v) \in A$.
As usual, $d(v)$ denotes the degree of vertex $v \in V$, and $G(D)$ the undirected underlying graph of $D$.

We first note that $\eta_t(D) \geq \eta(G(D))$. Therefore,
lower bounds for the additive chromatic number also hold for the topological additive number. For instance, in~\cite{LUCKYCOTAINF} it is
proved that $\eta(G(D)) \geq \lceil \omega/(n-\omega+1) \rceil$, where $\omega$ is the size of a maximum clique of $G(D)$.
However, it is possible to get a tighter bound for $\eta_t$ as follows.

\begin{prop} \label{COTAINF}
Let $D \in \D$, $Q$ a clique of $D$ and $q_F$, $q_L$ the smallest and largest vertices of $Q$ respectively.
Then,
\[
  \eta_t(D) \geq \biggl\lceil \dfrac{d(q_F)+1}{d(q_L)-|Q|+2} \biggr\rceil.
\]
\end{prop}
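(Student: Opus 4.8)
The plan is to exploit the rigid structure that a clique imposes on the sums $S$ and then to isolate the contribution of the non-clique neighbors. First I would record that, since $D$ is a DAG whose vertices are ordered compatibly with $A$, the clique $Q=\{q_1,\dots,q_m\}$ (with $q_1=q_F<\dots<q_m=q_L$ and $m=|Q|$) is a transitive tournament: every pair $q_i,q_j$ with $i<j$ spans an arc $(q_i,q_j)\in A$. Hence any topological additive $k$-numbering $f$ forces $S(q_1)<S(q_2)<\dots<S(q_m)$.

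The key step is an algebraic rewriting of these sums. Writing $T=\sum_{\ell=1}^{m}f(q_\ell)$ for the total label mass of the clique and $\rho_j$ for the sum of $f$ over the neighbors of $q_j$ lying outside $Q$, the fact that every $q_\ell$ with $\ell\neq j$ is a neighbor of $q_j$ gives $S(q_j)=T-f(q_j)+\rho_j$. Substituting into $S(q_i)<S(q_j)$, the common term $T$ cancels and the inequality collapses to $f(q_j)-\rho_j<f(q_i)-\rho_i$. Thus $g(q_\ell):=f(q_\ell)-\rho_\ell$ is strictly decreasing along $q_1,\dots,q_m$; being integer valued, it must drop by at least $m-1$, so $g(q_F)\ge g(q_L)+(|Q|-1)$.

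It then remains to turn this into a bound on $k$. I would substitute $g(q_F)=f(q_F)-\rho_F$ and $g(q_L)=f(q_L)-\rho_L$, estimating each piece: $f(q_F)\le k$ and $f(q_L)\ge 1$ since labels lie in $\{1,\dots,k\}$; $\rho_F\ge d(q_F)-|Q|+1$, because $q_F$ has exactly $d(q_F)-(|Q|-1)$ neighbors outside $Q$, each labeled at least $1$; and symmetrically $\rho_L\le (d(q_L)-|Q|+1)\,k$. Chaining these yields $k\ge f(q_F)\ge d(q_F)+1-\rho_L\ge d(q_F)+1-(d(q_L)-|Q|+1)k$, and solving for $k$ gives $k\,(d(q_L)-|Q|+2)\ge d(q_F)+1$. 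Since $q_L$ has at least its $|Q|-1$ neighbors inside $Q$, the factor $d(q_L)-|Q|+2$ is positive, and integrality of $k$ promotes this to the claimed ceiling.

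The step I expect to be delicate is the handling of $\rho_L$. A first instinct is to bound $S(q_F)\ge d(q_F)$ and $S(q_L)\le d(q_L)\,k$ directly, but this loses the $|Q|$ term in the denominator and is strictly weaker (it cannot reproduce the tight triangle-tournament case). The improvement comes entirely from the substitution above, which cancels the large shared mass $T$ and exposes the non-clique sums $\rho_F,\rho_L$; the only subtlety is that $\rho_L$ enters with an unfavorable sign, so it must be controlled from above by $(d(q_L)-|Q|+1)k$ and carried to the left-hand side before dividing. Getting this bookkeeping right is exactly what produces $d(q_L)-|Q|+2$ in the denominator.
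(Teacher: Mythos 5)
Your proof is correct and is essentially the paper's own argument: your quantity $g(q)=f(q)-\rho_q$ is exactly the negative of the paper's $Y_q=\sum_{w\in N(q)\setminus Q}f(w)-f(q)$, and your observation that a strictly decreasing integer sequence must drop by at least $|Q|-1$ is the same counting the paper phrases as a pigeonhole argument on the range of the $Y_q$. The only (welcome) additions are your explicit remarks that the clique is a transitive tournament under the topological order and that the denominator $d(q_L)-|Q|+2$ is positive, both of which the paper leaves implicit.
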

\begin{proof}
We follow~\cite{LUCKYCOTAINF}. Let $f$ be a topological additive $k$-numbering of $D$. For each vertex $q \in Q$, let
$Y_q = \sum_{w \in N(q) \setminus Q} f(w) - f(q).$
It is clear that $|N(q) \setminus Q| - k \leq Y_q \leq k |N(q) \setminus Q| - 1$.

On the other hand, for any $q_1, q_2 \in Q$ such that $q_1 < q_2$, we have $S(q_1) < S(q_2)$, or equivalently,
\[
  Y_{q_1} + \sum_{w \in Q} f(w) < Y_{q_2} + \sum_{w \in Q} f(w).
\]
Hence, $Y_{q_1} < Y_{q_2}$. Since $q_F \leq q \leq q_L$ for all $q \in Q$, the values of $Y_q$ must be between
$|N(q_F) \setminus Q| - k$ and $k |N(q_L) \setminus Q| - 1$. By the pigeonhole principle,
we obtain $|Q| \leq k |N(q_L) \setminus Q| - |N(q_F) \setminus Q| + k$. Therefore, $k \geq \lceil (d(q_F)+1)/(d(q_L)-|Q|+2) \rceil$.
\end{proof}
Note that
\textrm{(i)}
  this bound is tight for $D \in \D$ when $G(D)$ is a complete graph or a complete bipartite graph, and
\textrm{(ii)}
  unlike the result given in~\cite{LUCKYCOTAINF}, larger cliques do not necessarily lead to better lower bounds.

Now, we analyze when a digraph has topological additive numberings.
The following is a sufficient condition.
\begin{obs} \label{CONDINFTY}
Let $D$ be a DAG and $u$, $v$ two vertices of $D$ such that $N(u) \subseteq N(v)$.
If there is a directed path from $v$ to $u$, then $D \notin \D$.
\end{obs}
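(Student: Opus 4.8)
The plan is to argue by contradiction, showing that the two hypotheses force incompatible constraints on the sums $S(u)$ and $S(v)$. So I would suppose $D \in \D$, fix a topological additive numbering $f$ of $D$, and let $S(\cdot)$ denote the induced neighbor-sums. The goal is to derive both $S(v) < S(u)$ and $S(u) \leq S(v)$, which is impossible.

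First I would exploit the directed path. Writing it as $v = p_0, p_1, \ldots, p_m = u$ with each $(p_{i-1}, p_i) \in A$, I apply the defining inequality $S(p_{i-1}) < S(p_i)$ to every arc and chain the results to get $S(v) = S(p_0) < S(p_1) < \cdots < S(p_m) = S(u)$, hence $S(v) < S(u)$. Here it is important that $u$ and $v$ are distinct, so the path has at least one arc and the inequality is strict. Next I would use the neighborhood containment: since $f$ takes values in the positive integers and $N(u) \subseteq N(v)$, every term of $S(u) = \sum_{w \in N(u)} f(w)$ also appears in $S(v) = \sum_{w \in N(v)} f(w)$, and the extra terms are nonnegative, so $S(u) \leq S(v)$. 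Combining the two gives $S(v) < S(u) \leq S(v)$, a contradiction, and therefore $D \notin \D$.

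There is no genuine obstacle in this argument; it is a short, direct contradiction. The only point that requires care is the asymmetry in strictness: the path yields a \emph{strict} inequality $S(v) < S(u)$ (which is where the hypothesis $u \neq v$, guaranteeing a nonempty path, is actually used), whereas the containment yields only the \emph{non-strict} $S(u) \leq S(v)$ in the opposite direction. It is precisely this mismatch that lets the two bounds collide, so I would take care to state explicitly that the positivity of the labels is what makes $S(u) \leq S(v)$ hold.
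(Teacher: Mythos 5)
Your proof is correct and is precisely the argument the paper has in mind: the Observation is stated without proof because it follows from exactly this one-line clash, where the directed path forces $S(v) < S(u)$ while $N(u) \subseteq N(v)$ and the positivity of the labels force $S(u) \leq S(v)$. Your explicit attention to the strict-versus-non-strict asymmetry and to the nonemptiness of the path is a careful rendering of the same idea, so there is nothing to add.
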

The previous condition is not necessary since the digraph in Figure~\ref{CONTRAEJEMPLO} does not belong to $\D$ either.
\begin{figure}[h]
\begin{center}
\includegraphics[scale=0.7]{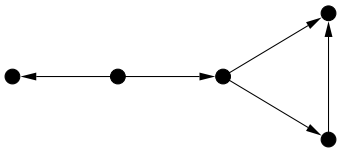}
\end{center}
\caption{A digraph that does not belong to $\D$.}
\label{CONTRAEJEMPLO}
\end{figure}

Although we do not know a combinatorial characterization of $\D$, we now describe a polynomial-time procedure
that determines whether a digraph is in $\D$.
Observe that the following integer linear program solves TAN:
\begin{align}
\min k & & \notag \\
\textrm{subject to} & & \notag \\
 & \sum_{w \in N(v)} f(w) - \sum_{w \in N(u)} f(w) \geq 1, & \forall~~(u, v) \in A \label{RARCO}\\
 & k - f(v) \geq 0, & \forall~~v \in V  \label{RKK}\\
 & f(v) \in \N, & \forall~~v \in V
\nonumber
\end{align}
We call \emph{IPF} this formulation and \emph{LR} its linear relaxation, \ie the linear program that comprises constraints~\eqref{RARCO},
\eqref{RKK} and $f(v) \geq 1$ for all $v \in V$.
If LR is infeasible, then $D \notin \D$. Otherwise, there exists an optimal solution of LR whose components are rational numbers; by multiplying these components by a suitable positive integer, we obtain a topological additive numbering of $D$.
Therefore, LR is feasible if, and only if, $D \in \D$.
Since deciding whether LR is feasible can be computed in polynomial time, we conclude that:
\begin{prop}
Given a DAG $D$, deciding whether $D \in \D$ is in \p.
\end{prop}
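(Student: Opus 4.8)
The plan is to leverage the integer program IPF and, more precisely, its linear relaxation LR exactly as the excerpt sets up. The claim is that deciding $D \in \D$ lies in $\p$, and the preceding discussion has already established the crucial equivalence: $D \in \D$ if, and only if, LR is feasible. So the whole proof reduces to a single observation, namely that feasibility of a rational linear program can be decided in polynomial time, and then invoking that the size of LR is polynomial in the size of $D$.

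First I would recall the reduction established just above the statement. Constraints~\eqref{RARCO} and~\eqref{RKK}, together with the lower bounds $f(v) \geq 1$, form a system of linear inequalities in the real variables $f(v)$ (for $v \in V$) and $k$. If this system is infeasible, no labeling exists, so $D \notin \D$; if it is feasible, any rational feasible point can be scaled by a common denominator to yield an integer labeling witnessing $D \in \D$. This gives the biconditional ``$D \in \D \iff \text{LR feasible}$'' that does the conceptual work, so the remaining task is purely about the complexity of LP feasibility.

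Next I would argue that LR has size polynomial in the encoding of $D$. There is one variable per vertex plus the single variable $k$, so $n+1$ variables; there is one constraint per arc in~\eqref{RARCO}, one per vertex in~\eqref{RKK}, and one lower bound per vertex, for a total that is linear in $|V| + |A|$. Moreover, every coefficient appearing in the system is either $0$, $1$, or $-1$, so the bit-length of the input to the LP solver is polynomial in $|V|+|A|$. I would then simply invoke a polynomial-time linear programming algorithm (for instance Khachiyan's ellipsoid method, or an interior-point method) which, given a rational system of linear inequalities, decides feasibility in time polynomial in the input size. Applying such an algorithm to LR therefore decides feasibility, and hence membership in $\D$, in polynomial time.

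I do not expect a genuine obstacle here, since the heavy lifting is already done by the feasibility equivalence stated in the excerpt; the only point requiring care is the scaling argument that turns a rational feasible solution into an integral one, but this is routine because multiplying all labels by the least common multiple of their denominators preserves every strict inequality $S(u) < S(v)$ (these are homogeneous of degree one in $f$) and keeps the labels positive integers. Thus the mild subtlety is merely to confirm that the constraints are invariant under positive scaling, after which the conclusion follows immediately.
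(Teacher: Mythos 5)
Your proposal is correct and follows essentially the same route as the paper: establish that $D \in \D$ if and only if LR is feasible (scaling a rational feasible point to an integral labeling, noting the defining inequalities $S(u) < S(v)$ survive positive scaling), then invoke polynomial-time solvability of linear programming on a system whose size is polynomial in $|V|+|A|$. The extra details you supply (counting variables and constraints, bounding coefficient bit-lengths, naming the ellipsoid method) are exactly the routine facts the paper leaves implicit.
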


Since the matrix of coefficients of a standard integer programming formulation for TN is totally unimodular for every digraph~\cite{BALANCEDACYCLIC},
TN can be solved in polynomial time.
It is only natural to ask which digraphs attain such a property for TAN.
The following result shows that TAN is much harder.
\begin{thm}
Let $D$ be a connected DAG.
The matrix of IPF is totally unimodular if, and only if, $G(D)$ is a complete graph.
\end{thm}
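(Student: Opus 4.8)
The plan is to work directly with the coefficient matrix $M$ of IPF, whose columns are indexed by the variables $\{f(v):v\in V\}\cup\{k\}$ and whose rows are the constraints \eqref{RARCO} and \eqref{RKK}. First I would record the entries. In the row of an arc $(u,v)\in A$, the coefficient of $f(w)$ is $[w\in N(v)]-[w\in N(u)]$; since $u\in N(v)$ and $v\in N(u)$, the entry of $f(u)$ is $+1$, that of $f(v)$ is $-1$, that of $k$ is $0$, and a further vertex $w$ receives a nonzero ($\pm1$) entry exactly when it is adjacent to precisely one of $u,v$ (namely $+1$ if $w\in N(v)\setminus N(u)$ and $-1$ if $w\in N(u)\setminus N(v)$). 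In the row of \eqref{RKK} for $v$ the entries are $+1$ on $k$, $-1$ on $f(v)$, and $0$ elsewhere.

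For the direction ``$G(D)$ complete $\Rightarrow$ $M$ totally unimodular'' I would observe that when $G(D)=K_n$ we have $N(u)\setminus N(v)=\{v\}$ and $N(v)\setminus N(u)=\{u\}$ for every arc $(u,v)$, so no third vertex is adjacent to exactly one of $u,v$ and the arc row collapses to the single pair $+1$ on $f(u)$, $-1$ on $f(v)$. Combined with the rows \eqref{RKK}, every row of $M$ then has exactly one $+1$ and one $-1$; such a matrix is the transpose of the incidence matrix of a directed graph and is totally unimodular by the classical result. This direction is therefore immediate once the entries are written out.

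For the converse I would prove the contrapositive: if $G(D)$ is not complete, then $M$ is not totally unimodular. The key step is to locate a local witness, namely an arc $(u,v)$ together with a vertex $w$ adjacent to exactly one of $u,v$. To produce it I would invoke connectivity: pick two non-adjacent vertices and a shortest path $p_0,\dots,p_\ell$ between them (so $\ell\ge 2$); then the edge $\{p_0,p_1\}$, oriented as an arc of $D$, together with $w=p_2$ works, since $p_2$ is adjacent to $p_1$ but not to $p_0$. Writing $s$ for the endpoint of the arc whose arc-row entry has the same sign as $w$'s entry (i.e.\ $s=u$ when $w\in N(v)\setminus N(u)$, and $s=v$ when $w\in N(u)\setminus N(v)$), I would then select the arc row and the two rows \eqref{RKK} for $s$ and $w$, restricted to the columns $f(s)$, $f(w)$, $k$. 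Up to negating the arc row this submatrix equals $\left(\begin{smallmatrix} 1&1&0\\ -1&0&1\\ 0&-1&1\end{smallmatrix}\right)$, whose determinant is $2$, so $M$ is not totally unimodular.

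I expect the main obstacle to lie in the converse, and specifically in the existence of the witnessing configuration rather than in the $3\times3$ determinant, which is routine. One must argue cleanly that non-completeness of the connected graph $G(D)$ always forces an arc with a vertex attached to exactly one of its endpoints, and then organize the short case distinction on whether $w\in N(v)\setminus N(u)$ or $w\in N(u)\setminus N(v)$ so that the chosen columns always reproduce the matrix above up to a row sign. The forward direction and the bookkeeping of entries should present no real difficulty.
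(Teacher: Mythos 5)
Your proof is correct and takes essentially the same route as the paper's: the forward direction observes that for complete $G(D)$ every row of $M$ has exactly one $+1$ and one $-1$ (a transposed digraph incidence matrix, hence totally unimodular), and the converse exhibits the same $3\times 3$ submatrix of determinant $\pm 2$ formed by an arc row, a vertex adjacent to exactly one endpoint of that arc, and two rows of type $k - f(\cdot) \ge 0$. Your shortest-path argument merely spells out the witnessing triple that the paper asserts directly from connectivity, and your sign bookkeeping matches the paper's two orientation cases.
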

\begin{proof}
Let $M$ be the matrix of IPF.\\
$\Leftarrow$)
Since for every $u, v \in V$ we have $N(u) \setminus N(v) = \{v\}$, constraints~\eqref{RARCO} are $f(u) - f(v) \geq 1$ for all $u < v$.
Then, $M$ has two non-zero coefficients in each row: one is $1$ and the other is $-1$. According to Prop.~2.6 of~\cite{NEMHAUSER}, $M^T$ is totally unimodular.
Therefore, $M$ is totally unimodular by Prop.~2.1 of~\cite{NEMHAUSER}.\\
$\Rightarrow$)
Suppose that $G(D)$ is not a complete graph. Since $D$ is connected, there exist $u, v, w \in V$ such that $u$ is adjacent to $v$, $v$ is adjacent
to $w$ and $u$ is not adjacent to $w$.

Consider first the case when $(u, v) \in A$.
Then, its corresponding constraint~\eqref{RARCO} has coefficients $1$ for $u$ and $w$ (and $-1$ for $v$). Let $i$ be the row index of that constraint.
Let $M'$ be the submatrix of $M$ whose columns correspond to variables $k$, $f(u)$ and $f(w)$, and whose rows are given by $i$ and constraints $k - f(u) \geq 0$,
$k - f(w) \geq 0$. Hence,
$$M' = \left( \begin{array}{ccc}
  0 & 1 & 1 \\
  1 & -1 & 0 \\
  1 & 0 & -1
\end{array} \right).$$
Since the determinant of $M'$ is $2$, $M$ is not totally unimodular.
The other case, $(v, u) \in A$, inverts the sign of the first row of $M'$, with same conclusion.
\end{proof}

Next, we present some families of digraphs where TAN is solved in polynomial time.
We say that a digraph $D$ is \emph{$r$-partite} when $G(D)$ is $r$-partite,
and $D$ is \emph{complete} when $G(D)$ is complete.
We say that an $r$-partite digraph is \emph{monotone} when it can be partitioned into $V_1, V_2, \ldots, V_r$ and
each of the arcs in $V_i \times V_j$ satisfies $i < j$.
It is easy to see that a complete $r$-partite digraph belongs to $\D$ if, and only if, it is monotone. In this case, the topological
additive number can be computed as follows.
\begin{prop}
\label{prop:monotone.rpartite}
Let $D$ be a complete monotone $r$-partite digraph. Then,
\[
  \eta_t(D) = \max \biggl\{ \biggl\lceil \dfrac{s_i}{|V_i|} \biggr\rceil : i = 1,\ldots,r \biggr\},
\]
where $s_r = |V_r|$ and $s_i = \max\{1 + s_{i+1}, |V_i|\}$ for all $i = 1,\ldots,r-1$.
\end{prop}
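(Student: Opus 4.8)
The plan is to exploit the special structure of a complete $r$-partite graph: every vertex $v \in V_i$ is adjacent to exactly the vertices outside its own part, so all vertices of $V_i$ share the common neighborhood $V \setminus V_i$ and hence a common neighbor-sum. Writing $\sigma_i = \sum_{w \in V_i} f(w)$ and $T = \sum_{w \in V} f(w)$, I would first observe that $S(v) = T - \sigma_i$ for every $v \in V_i$. Because $D$ is complete and monotone, between every pair of parts $i < j$ all arcs are directed from $V_i$ to $V_j$, so the whole family of arc constraints $S(u) < S(v)$ collapses to the chain $S_1 < S_2 < \cdots < S_r$ on the per-part values $S_i = T - \sigma_i$, which is equivalent to requiring $\sigma_1 > \sigma_2 > \cdots > \sigma_r$.

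Next I would reduce the labeling problem to a purely numerical one. A labeling with values in $\{1,\ldots,k\}$ realizes a given vector $(\sigma_1,\ldots,\sigma_r)$ if and only if each $\sigma_i$ is an integer in $[\,|V_i|, k|V_i|\,]$, since the sums attainable by $|V_i|$ labels from $\{1,\ldots,k\}$ are exactly the integers in that interval. Hence $\eta_t(D)$ is the least $k$ for which there exist integers $\sigma_1 > \cdots > \sigma_r$ with $|V_i| \le \sigma_i \le k|V_i|$; as $\sigma_i \le k|V_i|$ is equivalent to $k \ge \lceil \sigma_i/|V_i| \rceil$, this equals the minimum over all admissible decreasing sequences of $\max_i \lceil \sigma_i/|V_i| \rceil$.

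The heart of the argument is to show that the greedy sequence defined by $s_r = |V_r|$ and $s_i = \max\{1 + s_{i+1}, |V_i|\}$ minimizes every coordinate at once. I would prove by downward induction that any admissible sequence satisfies $\sigma_i \ge s_i$: the base case is $\sigma_r \ge |V_r| = s_r$, and if $\sigma_{i+1} \ge s_{i+1}$ then $\sigma_i > \sigma_{i+1} \ge s_{i+1}$ forces $\sigma_i \ge s_{i+1}+1$, which together with $\sigma_i \ge |V_i|$ gives $\sigma_i \ge s_i$. Conversely, $(s_1,\ldots,s_r)$ is itself admissible, since $s_i \ge s_{i+1}+1 > s_{i+1}$ and $s_i \ge |V_i|$, and for $k = \max_i \lceil s_i/|V_i|\rceil$ one checks $s_i \le k|V_i|$ so the sequence is realizable. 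As $\lceil \cdot /|V_i|\rceil$ is nondecreasing, this coordinatewise-minimal admissible sequence also minimizes the maximum, yielding $\eta_t(D) = \max_i \lceil s_i/|V_i|\rceil$.

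I expect the only delicate point to be justifying that minimizing each $\sigma_i$ individually is compatible with minimizing the maximum of the ceilings, that is, that a single coordinatewise-minimal admissible sequence exists and is attained by the recursion for $s_i$. The downward induction settles this cleanly, so beyond the routine verification of realizability I anticipate no substantive obstacle.
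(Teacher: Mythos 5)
Your proposal is correct and follows essentially the same route as the paper: both reduce the arc constraints to the chain $f(V_1) > f(V_2) > \cdots > f(V_r)$ on part-sums, show the sequence $(s_i)$ is coordinatewise minimal among feasible part-sum vectors (your downward induction is exactly the contrapositive of the paper's maximal-index contradiction argument), and realize $(s_i)$ by a labeling with maximum label $\max_i \lceil s_i/|V_i| \rceil$. Your explicit characterization of realizable part sums as the integers in $[\,|V_i|, k|V_i|\,]$ is a slightly cleaner packaging of the paper's near-balanced labeling construction, but the underlying argument is the same.
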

\begin{proof}
For any labeling $f$ and set $S \subset V$, let $f(S) = \sum_{v \in S} f(v)$.
Note that $f$ is a topological additive numbering if, and only if, $f(V_i) > f(V_{i+1})$ for all
$i = 1, \ldots, r-1$, since for all $j > i$, $u \in V_i$ and $w \in V_j$, we have $S(w) - S(u) = f(V_i) - f(V_j) > 0$.

Now, consider a labeling $f$ such that, for all $i = 1, \ldots, r$, $f(V_i) = s_i$ and $f(v) \in \{ \lfloor s_i/|V_i| \rfloor, \lceil s_i/|V_i| \rceil \}$ for all $v \in V_i$. Clearly, it is a topological additive $p$-numbering with
$p = \max \{ \lceil s_i/|V_i| \rceil : i = 1,\ldots,r-1 \}$.

In order to prove that $f$ is optimal, and by way of contradiction, suppose that there is a topological additive numbering $f'$ such that $f'(V_j) < f(V_j)$ for some $j \in \{1, \ldots, r\}$; moreover, assume that $j$ is the largest index satisfying this inequality.
Then, from $f'(V_j) \ge |V_j|$ follows that
\[
  f'(V_j) < f(V_j) = 1 + s_{j+1} = 1 + f(V_{j+1}) \le 1 + f'(V_{j+1}),
\]
contradicting that $f'(V_j) > f'(V_{j+1})$.
\end{proof}

We now extend Proposition~\ref{prop:monotone.rpartite} for monotone (not necessarily complete) bipartite digraphs.
As implied by Theorem~\ref{thm:np.hard} (in Section~\ref{COMPLEXITY}), it is $\np$-hard to obtain $\eta_t(D)$ for general
bipartite digraphs.
\begin{prop}
Let $D$ be a monotone bipartite digraph. Then,
\[
  \eta_t(D) = \max \biggl\{
	\biggl\lfloor \frac{d(u)}{d(v)} \biggr\rfloor + 1:
	v \in V_2, u \in N(v)
  \biggr\}.
\]
\end{prop}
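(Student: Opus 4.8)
The plan is to prove the formula by establishing two matching inequalities; write $k^{\star}$ for the right-hand side $\max\{\lfloor d(u)/d(v)\rfloor + 1 : v \in V_2,\, u \in N(v)\}$. The first thing I would record is the structural consequence of monotonicity: since $D$ is bipartite with classes $V_1, V_2$ and every arc lies in $V_1 \times V_2$, each arc has the form $(u,v)$ with $u \in V_1$ and $v \in V_2$, so $N(u) \subseteq V_2$ for $u \in V_1$ and $N(v) \subseteq V_1$ for $v \in V_2$. This is the fact that makes everything run, because it determines exactly which labels feed into $S(u)$ and into $S(v)$ across any arc.

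For the upper bound $\eta_t(D) \le k^{\star}$, I would exhibit the explicit two-valued labeling $f(w) = k^{\star}$ for $w \in V_1$ and $f(w) = 1$ for $w \in V_2$. By the observation above, $S(u) = d(u)$ for every $u \in V_1$ and $S(v) = k^{\star} d(v)$ for every $v \in V_2$, so the constraint $S(u) < S(v)$ on an arc $(u,v)$ becomes $d(u) < k^{\star} d(v)$. This holds because $k^{\star} \ge \lfloor d(u)/d(v)\rfloor + 1 > d(u)/d(v)$ for each arc. Since the labels used all lie in $\{1,\ldots,k^{\star}\}$, this is a topological additive $k^{\star}$-numbering (and in passing it reconfirms $D \in \D$).

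For the matching lower bound, I would take an arbitrary topological additive numbering $f$ with largest label $L$ and let $(u^{\star}, v^{\star})$ be an arc attaining the maximum that defines $k^{\star}$. Using $f \ge 1$ on $N(u^{\star})$ and $f \le L$ on $N(v^{\star})$ together with the arc inequality, I would sandwich
\[
  d(u^{\star}) \le S(u^{\star}) < S(v^{\star}) \le L\, d(v^{\star}),
\]
so that $L > d(u^{\star})/d(v^{\star}) \ge \lfloor d(u^{\star})/d(v^{\star})\rfloor = k^{\star} - 1$. The one point needing care is the final step: because $L$ and $k^{\star}-1$ are integers, the strict inequality $L > k^{\star}-1$ upgrades to $L \ge k^{\star}$, yielding $\eta_t(D) \ge k^{\star}$ and closing the argument. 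I do not expect a serious obstacle beyond this bookkeeping; the real content is the monotonicity observation of the first paragraph, which both directions of the proof exploit.
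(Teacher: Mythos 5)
Your proof is correct, and its two halves line up with the paper's as follows. The upper bound is identical: the paper uses exactly the same two-valued labeling ($p$ on $V_1$, $1$ on $V_2$) and the same verification. The lower bound, however, takes a different route. The paper does not argue directly; it invokes its Proposition~\ref{COTAINF} (the clique lower bound) applied to the $2$-clique $Q = \{u^*, v^*\}$, which yields $\eta_t(D) \ge \lceil (d(u^*)+1)/d(v^*) \rceil$, and then uses the identity $\lceil (d(u^*)+1)/d(v^*) \rceil = \lfloor d(u^*)/d(v^*) \rfloor + 1$ to match the stated formula. You instead re-derive the bound from scratch via the sandwich $d(u^*) \le S(u^*) < S(v^*) \le L\,d(v^*)$ plus integrality of $L$, which is essentially what Proposition~\ref{COTAINF}'s proof specializes to on a two-vertex clique. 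What the paper's route buys is brevity, since the clique bound is already established; what yours buys is self-containment, and in fact something slightly more general: your sandwich argument nowhere uses bipartiteness or monotonicity, so it shows that $\eta_t(D) \ge \lfloor d(u)/d(v)\rfloor + 1$ for \emph{every} arc $(u,v)$ of \emph{any} digraph in $\D$, with the structure of $D$ entering only in the upper-bound construction. Both arguments are sound; your integrality step ($L > k^\star - 1$ with both sides integers, hence $L \ge k^\star$) is exactly the bookkeeping that replaces the paper's floor/ceiling identity.
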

\begin{proof}
Let $v^* \in V_2$ and $u^* \in N(v^*)$ be such that $\lfloor d(u^*)/d(v^*) \rfloor$ is maximized, and let
$p = \lfloor d(u^*)/d(v^*) \rfloor + 1 = \lceil (d(u^*) + 1)/d(v^*) \rceil$.
Proposition~\ref{COTAINF} applied to $Q = \{ u^*, v^*\}$ grants $\eta_t(D) \ge p$.
A topological additive $p$-numbering $f$, defined by $f(v) = 1$ for vertices $v \in V_2$ and $f(v)=p$ for $v \in V_1$, provides the matching upper bound.
\end{proof}

\section{Computational complexity of TAN}
\label{COMPLEXITY}
We have seen that deciding whether $D \in \D$ can be done in polynomial time.
Moreover, deciding whether $\eta_t(D) = 1$ can be computed fast by checking whether $d(u) < d(v)$ for every arc $(u,v)$.
Nevertheless, deciding whether $\eta_t(D) = 2$ is \np-complete.
The proof given below shares the same approach of~\cite{LUCKYCOMPLEXITY}.

Let $\Phi$ be a 3-SAT formula with sets of clauses $C$ and variables $X$;
let $G_\Phi = (V_\Phi, E_\Phi)$ be the graph of $\Phi$, where $V_\Phi = C \cup X \cup \{\lnot x: x \in X\}$ and
$E_\Phi = \{(x, \lnot x): x \in X\} \cup \{(c, y), (c, z), (c, w): c \in C, c = y \lor z \lor w\}$.
It is known that, given a 3-SAT formula $\Phi$ for which $G_\Phi$ is planar, deciding whether there is a truth assignment
that satisfies $\Phi$ is \np-complete~\cite{SATPLANAR}.
This problem is called \emph{Planar 3-SAT (type 2)} (P3SAT2).
We will assume, without loss of generality, that no literal is repeated within a clause (since, for instance, each clause of the form $y \lor y \lor z$ may be replaced by two clauses $x \lor y \lor z$ and $\lnot x \lor y \lor z$, where $x$ is an unused literal, maintaining planarity).

Our proof relies on a polynomial-time reduction from P3SAT2 to TAN.
Consider an instance $\Phi$ of P3SAT2 and construct the following digraph $D_\Phi$
from $G_\Phi$ as follows (Figure~\ref{CONSTR}):
\begin{itemize}
\item
  For each $x \in X$,
  add vertices $x^1, x^2, \ldots, x^5, u^1, u^2, \ldots, u^6$ to $V$,
  and replace edge $(x, \lnot x)$ with
arcs $(x^1, x)$, $(x^1, \lnot x)$, $(x^2, x^1)$, $(x^3, x^2)$, $(x^4, x^2)$, $(x^5, x^2)$,
$(u^1, x)$, $(u^2, x)$, $(u^3, x)$, $(u^4, \lnot x)$, $(u^5, \lnot x)$, $(u^6, \lnot x)$.
\item
  For each $c = y \lor z \lor w \in C$,
  add vertices $c^1, c^2, \ldots, c^5$ to $V$,
  and replace edges $(c, y)$, $(c, z)$ and $(c, w)$
  with arcs $(c, y)$, $(c, z)$, $(c, w)$, $(c, c^1)$, $(c^2, c^1)$, $(c^3, c^1)$, $(c^4, c^1)$, $(c^5, c)$.
\end{itemize}
By construction and since $G_\Phi$ is planar, $G(D_\Phi)$ is planar and bipartite.
\begin{figure}
\begin{center}
\includegraphics[scale=0.7]{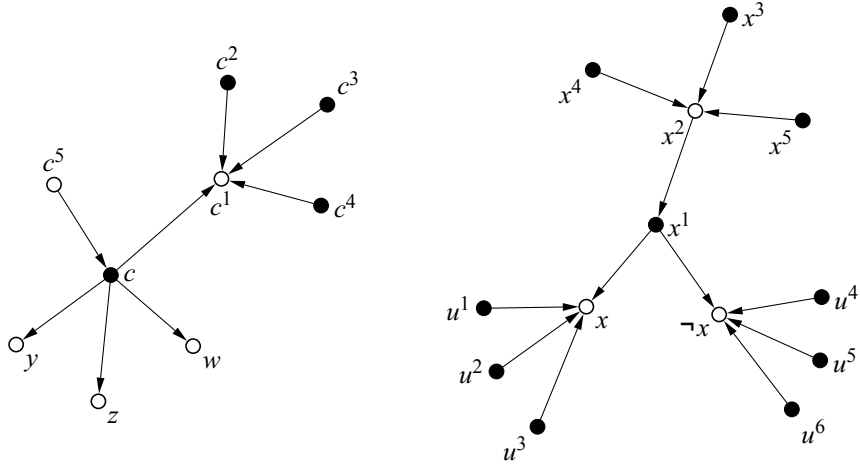}
\vspace{-13pt}
\end{center}
\caption{Construction of digraph $D_\Phi$: for each variable $x$, $D_\Phi$ has a copy of the right digraph, and for each clause $c = y \lor z \lor w$, $D_\Phi$ has a copy
of the left digraph. A bipartition is shown through the color of the vertices.}
\label{CONSTR}
\end{figure}

For the next two lemmas assume that $D_\Phi$ has a topological additive 2-numbering $f$.
\begin{lem} \label{LEMA1O}
  $f(x) + f(\lnot x) \geq 3$ for all $x \in X$.
\end{lem}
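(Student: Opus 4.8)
The plan is to argue by contradiction. Since $f$ uses only labels in $\{1,2\}$, the inequality $f(x)+f(\lnot x)\geq 3$ can fail only when $f(x)=f(\lnot x)=1$, so I would assume this equality and derive a contradiction from the local gadget attached to the auxiliary vertices $x^1$ and $x^2$.

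First I would read off the two relevant neighborhoods directly from the arc list of the construction. The vertex $x^1$ is adjacent exactly to $x$, $\lnot x$ and $x^2$ (via the arcs $(x^1,x)$, $(x^1,\lnot x)$, $(x^2,x^1)$), so $S(x^1)=f(x)+f(\lnot x)+f(x^2)$. The vertex $x^2$ is adjacent exactly to $x^1,x^3,x^4,x^5$ (via $(x^2,x^1)$ and $(x^3,x^2),(x^4,x^2),(x^5,x^2)$), so $S(x^2)=f(x^1)+f(x^3)+f(x^4)+f(x^5)$. The single arc I need is $(x^2,x^1)$, which forces $S(x^2)<S(x^1)$.

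Next I would bound both sides of this strict inequality. Since every label is at least $1$, the four summands of $S(x^2)$ give $S(x^2)\geq 4$. Under the assumption $f(x)=f(\lnot x)=1$, together with $f(x^2)\leq 2$, I obtain $S(x^1)=2+f(x^2)\leq 4$. Combining these estimates with the arc inequality yields $4\leq S(x^2)<S(x^1)\leq 4$, which is impossible. Hence the case $f(x)=f(\lnot x)=1$ cannot occur, and therefore $f(x)+f(\lnot x)\geq 3$.

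I expect no serious obstacle here: the sub-gadget around $x^1$ and $x^2$ is engineered precisely so that a ``both-false'' labeling overloads $S(x^2)$ relative to $S(x^1)$ and breaks the arc $(x^2,x^1)$. The only step demanding care is correctly identifying $N(x^1)$ and $N(x^2)$ from the construction (in particular, recognizing that $x^3,x^4,x^5$ are leaves that contribute to $S(x^2)$), after which the conclusion follows from a one-line counting estimate rather than any genuine calculation.
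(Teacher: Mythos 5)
Your proof is correct and uses essentially the same argument as the paper: both rest on the arc $(x^2,x^1)$ forcing $S(x^2) < S(x^1)$, the bound $S(x^2) \geq 4$ from the four neighbors $x^1,x^3,x^4,x^5$, and the bound $f(x^2) \leq 2$. The only difference is cosmetic: the paper derives $f(x)+f(\lnot x) \geq 5 - f(x^2) \geq 3$ directly, while you phrase the same inequalities as a contradiction starting from $f(x)=f(\lnot x)=1$.
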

\begin{proof}
In first place, $S(x^2) < S(x^1)$. Since $x^2$ has 4 neighbors, $S(x^2) \geq 4$ and then $S(x^1) \geq 5$.
Since $S(x^1) = f(x) + f(\lnot x) + f(x^2)$ and $f(x^2) \leq 2$, we get $f(x) + f(\lnot x) \geq 3$.
\end{proof}

\begin{lem} \label{LEMA2O}
  $f(y) + f(z) + f(w) \leq 5$ for all $c = y \lor z \lor w \in C$.
\end{lem}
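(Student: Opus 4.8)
The plan is to mirror the proof of Lemma~\ref{LEMA1O}, this time exploiting the clause gadget rather than the variable gadget. The decisive arc is $(c, c^1)$, which forces $S(c) < S(c^1)$; the goal is to convert this single inequality into the desired bound on $f(y) + f(z) + f(w)$. Whereas Lemma~\ref{LEMA1O} used a \emph{lower} bound coming from a degree-$4$ vertex, here I would use the dual \emph{upper} bound coming from the degree-$4$ vertex $c^1$.

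First I would read off the two relevant neighborhoods directly from the construction. The vertex $c^1$ is adjacent exactly to $c, c^2, c^3, c^4$ (via the arcs $(c, c^1)$, $(c^2, c^1)$, $(c^3, c^1)$, $(c^4, c^1)$), so it has four neighbors. The vertex $c$ is adjacent exactly to $y, z, w, c^1, c^5$ (via $(c, y)$, $(c, z)$, $(c, w)$, $(c, c^1)$, $(c^5, c)$), and to nothing else, since $c$ does not occur in any other gadget.

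Next I would bound $S(c^1)$ from above: since every label lies in $\{1, 2\}$ and $c^1$ has four neighbors, $S(c^1) = f(c) + f(c^2) + f(c^3) + f(c^4) \le 8$. Combined with $S(c) < S(c^1)$, this gives $S(c) \le 7$. Finally, expanding $S(c) = f(y) + f(z) + f(w) + f(c^1) + f(c^5)$ and using $f(c^1) \ge 1$ and $f(c^5) \ge 1$, I obtain $f(y) + f(z) + f(w) \le S(c) - 2 \le 5$.

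I do not expect a genuine obstacle here: the argument is a direct counting estimate once the neighborhoods are correctly identified. The only point requiring care is the bookkeeping of the gadget, namely verifying that $c$ has no neighbors beyond $y, z, w, c^1, c^5$ and that $c^1$ has exactly four neighbors, since an off-by-one in either degree would shift the final constant away from $5$.
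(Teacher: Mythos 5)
Your proposal is correct and follows exactly the paper's own argument: bound $S(c^1) \le 8$ from the four neighbors of $c^1$ and labels in $\{1,2\}$, deduce $S(c) \le 7$ from the arc $(c, c^1)$, and subtract $f(c^1) + f(c^5) \ge 2$ from the expansion of $S(c)$. The extra verification of the gadget's neighborhoods is sound bookkeeping but does not change the route.
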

\begin{proof}
In first place, $S(c) < S(c^1)$. Since $c^1$ has 4 neighbors, $S(c^1) \leq 8$.
Hence, $S(c) \leq 7$. Since $S(c) = f(y) + f(z) + f(w) + f(c^1) + f(c^5)$ and $f(c^1) + f(c^5) \geq 2$, we get
$f(y) + f(z) + f(w) \leq 5$.
\end{proof}

\begin{thm}
\label{thm:np.hard}
It is \np-complete to decide whether $\eta_t(D) = 2$ for a digraph $D$ whose underlying graph is planar and bipartite.
\end{thm}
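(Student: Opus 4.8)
The plan is to prove both membership in \np\ and \np-hardness, the latter through the polynomial-time reduction already set up: I will show that $\Phi$ is satisfiable if and only if $\eta_t(D_\Phi) = 2$. Membership in \np\ is easy, since a topological additive $2$-numbering $f$ serves as a polynomial certificate (checking $S(u) < S(v)$ over all arcs takes polynomial time); and to certify that the optimum equals $2$ rather than $1$, I would note deterministically that $\eta_t(D) = 1$ holds exactly when the all-ones labeling works, that is, when $d(u) < d(v)$ for every arc, which is checkable in polynomial time. As $G(D_\Phi)$ is planar and bipartite by construction, this places the restricted problem in \np.

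For the forward implication, from a satisfying assignment I would build a $2$-numbering $f$ by labelling each true literal with $1$ and each false literal with $2$, and by fixing the auxiliary vertices as $f(x^1) = f(x^3) = f(x^4) = f(x^5) = 1$, $f(x^2) = 2$, $f(u^i) = 2$ for $1 \le i \le 6$, and $f(c) = f(c^2) = f(c^3) = f(c^4) = 2$, $f(c^1) = f(c^5) = 1$. The two facts that make this work are that $f(x) + f(\lnot x) = 3$ for every variable (exactly one literal is true) and $f(y) + f(z) + f(w) \le 5$ for every clause (some literal is true): the first drives the chain of inequalities inside each variable gadget, and the second keeps $S(c)$ strictly below both $S(c^1)$ and every $S(y)$. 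I would then confirm each arc constraint gadget by gadget; the padding vertices $u^1,\dots,u^6$ are present precisely to raise $S(x)$ and $S(\lnot x)$ above $S(x^1)$.

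For the converse, suppose $f$ is a topological additive $2$-numbering and define a truth assignment by declaring $x$ true exactly when $f(x) = 1$. Given a clause $c = y \lor z \lor w$, Lemma~\ref{LEMA2O} yields $f(y) + f(z) + f(w) \le 5$, so some literal, say $y$, has $f(y) = 1$. If $y$ is a positive literal $x$, then $x$ is true and $c$ is satisfied; if $y$ is a negative literal $\lnot x$, then $f(\lnot x) = 1$, so Lemma~\ref{LEMA1O} forces $f(x) = 2$, making $x$ false and hence $\lnot x$ true, so $c$ is again satisfied. Thus $\Phi$ is satisfiable. Finally, the all-ones labeling already violates the arc $(x^2, x^1)$, since under it $S(x^2) = 4 > 3 = S(x^1)$; hence $\eta_t(D_\Phi) \ge 2$ always, and the construction above gives $\eta_t(D_\Phi) = 2$ exactly when $\Phi$ is satisfiable.

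I expect the forward implication to be the main obstacle, as it requires choosing labels for all auxiliary vertices so that every arc constraint holds simultaneously, with several inequalities tight (for instance $S(x^1) = 5 < S(x)$ with $S(x) \ge 7$, and $S(c) \le 7 < 8 = S(c^1)$ with $S(c) < S(y)$ for $S(y) \ge 9$). The converse is comparatively clean: Lemmas~\ref{LEMA1O} and~\ref{LEMA2O} carry the argument, and the only genuine idea is that Lemma~\ref{LEMA1O} guarantees a label-$1$ negative literal is truly satisfied by the induced assignment.
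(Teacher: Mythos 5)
Your proposal is correct and takes essentially the same route as the paper: the same reduction from Planar 3-SAT via $D_\Phi$, the identical labeling of literal and auxiliary vertices in the forward direction, and the same use of Lemmas~\ref{LEMA1O} and~\ref{LEMA2O} to extract a satisfying assignment in the converse. Your two additions---explicitly certifying \np\ membership and ruling out $\eta_t(D_\Phi)=1$ via the all-ones labeling on the arc $(x^2,x^1)$---are details the paper leaves implicit, and they complete rather than change the argument.
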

\begin{proof}
We follow~\cite{LUCKYCOMPLEXITY}.
Let $\Phi$ be a 3-SAT formula such that $G_\Phi$ is planar, and $D_\Phi$ the digraph generated from $G_\Phi$ with the procedure given above.
We only need to show that there exists a topological additive $2$-numbering $f$ of $D_\Phi$ \emph{if and only if} there also exists a truth assignment
$\Gamma: X \rightarrow \{true, false\}$ that satisfies $\Phi$.\\
$\Leftarrow$)
Let $\Gamma$ be a truth assignment that satisfies $\Phi$. Below, we propose a topological additive $2$-numbering $f$ of $D_\Phi$:
\begin{itemize}
\item
  For each $x \in X$, let $f(x^1) = f(x^3) = f(x^4) = f(x^5) = 1$ and
  $f(x^2) = f(u^1) = f(u^2) = f(u^3) = f(u^4) = f(u^5) = f(u^6) = 2$;
  if $\Gamma(x) = true$ then let $f(x) = 1$ and $f(\lnot x) = 2$,
  otherwise, let $f(x) = 2$ and $f(\lnot x) = 1$.
  Then, $S(x^3) = S(x^4) = S(x^5) = 2$, $S(x^2) = 4$, $S(x^1) = 5$, $S(u^1) = S(u^2) = S(u^3) \leq 2$,
  $S(u^4) = S(u^5) = S(u^6) \leq 2$ and for all $x \in X \cup \lnot X$ we have $S(x) \geq 7$.
  Moreover, $S(x) \ge 9$ when $(c, x) \in A$.
\item
  For each $c \in C$, let $f(c) = f(c^2) = f(c^3) = f(c^4) = 2$ and $f(c^1) = f(c^5) = 1$. Then,
  $S(c^2) = S(c^3) = S(c^4) = 1$, $S(c^5) = 2$, $S(c^1) = 8$ and $5 \leq S(c) \leq 7$ (since $\Gamma$ satisfies $\Phi$).
\end{itemize}
$\Rightarrow$)
Let $f$ be a topological additive $2$-numbering $f$ of $D_\Phi$.
By Lemma~\ref{LEMA1O}, for each $x \in X$, the values $f(x)$ and $f(\lnot x)$ cannot be both 1.
Hence, we can set $\Gamma(x) = true$ when $f(x) = 1$ and $\Gamma(x) = false$ when $f(\lnot x) = 1$.
In the case that $f(x) =  f(\lnot x) = 2$, $\Gamma(x)$ may be arbitrarily $true$ or $false$.
Now, by Lemma~\ref{LEMA2O}, for every $c = y \lor z \lor w$, at least one of the three values $f(y)$, $f(z)$, $f(w)$ must be 1.
Therefore, the assignment satisfies $c$ and then $\Phi$.
\end{proof}


\begin{thebibliography}{10}

\bibitem{LUCKYCOMPLEXITY}
A. Ahadi, A. Dehghan, M. Kazemi and E. Mollaahmadi,
\emph{Computation of lucky number of planar graphs is NP-hard},
Information Processing Letters, \textbf{112} (2012), 109--112.

\bibitem{LUCKYCOTAINF}
A. Ahadi, A. Dehghan and E. Mollaahmadi,
\emph{On the Lucky Labeling of Graphs},
Manuscript.~~\texttt{http://arxiv.org/abs/1007.2480}

\bibitem{LUCKYCOTASUP}
S. Akbari, M. Ghanbari, R. Manaviyat and S. Zare,
\emph{On the Lucky Choice Number of Graphs},
Graphs and Combinatorics (2011), 1--7.

\bibitem{GROTSCHEL}
R. Bornd\"orfer, A. Eisenbl\"atter, M. Gr\"ostchel and A. Martin,
\emph{The orientation model for frequency assignment problems},
Tech. Rep. TR 98-01, Konrad-Zuse-Zentrum Berlin, April 1998.

\bibitem{BALANCEDACYCLIC}
E. Boros, P. L. Hammer, M. E. Hardmann and R. Shamir,
\emph{Balancing problems in acyclic networks},
Discrete Applied Mathematics, \textbf{49} (1994), 77--93.

\bibitem{LUCKYORIGINAL}
S. Czerwi\'nski, J. Grytczuk and W. \.Zelazny,
\emph{Lucky labelings of graphs},
Information Processing Letters, \textbf{109} (2009), 1078--1081.

\bibitem{TOPOBOOK}
G. Di Battista, \emph{Graph Drawing: Algorithms for the Visualization of Graphs},
Prentice Hall, 1999.

\bibitem{SATPLANAR}
Ding-Zhu Du, Ker-K Ko, J. Wang,
\emph{Introduction to Computational Complexity},
Higher Education Press, 2002.

\bibitem{GALLAI}
T. Gallai,
\emph{On directed graphs and circuits},
Theory of Graphs, Tihany, Academic Press, New York (1968), 115--118.

\bibitem{ADDITIVEPLANAR}
J. Grytczuk, T. Bartnicki, S. Czerwi\'nski, B. Bosek, G. Matecki and W. \.Zelazny,
\emph{Additive colorings of planar graphs},
Graphs and Combinatorics (2013), 1--12.

\bibitem{MACI2013}
J. Marenco, M. Mydlarz and D. Sever\'in,
\emph{A Combinatorial Benders' Approach for the Lucky Labeling Problem},
IV Congreso de Matem\'atica Aplicada, Computacional e Industrial, 251--254, May 2013.

\bibitem{NEMHAUSER}
G. Nemhauser and L. Wolsey,
\emph{Integer and Combinatorial Optimization},
John Wiley \& Sons, 1988

\end{thebibliography}
\end{document}